\documentclass[conference]{IEEEtran}

\IEEEoverridecommandlockouts
\usepackage{graphicx,cite,calc,xcolor,amssymb,amsmath,mathrsfs,dsfont,epstopdf}
\usepackage[normalem]{ulem}

\usepackage{float}

\usepackage{graphicx}
\usepackage{subcaption}
\usepackage{amsmath}
\usepackage{lipsum}
\usepackage{algorithm}
\usepackage{algpseudocode}
\usepackage{soul} 
\usepackage{xcolor}
\usepackage{optidef}
\usepackage{listings}
\usepackage{enumitem}
\usepackage{amsthm}
\usepackage[T1]{fontenc}
\usepackage{caption}

\usepackage{algorithm}
\makeatletter
\newcommand*{\rom}[1]{\expandafter\@slowromancap\romannumeral #1@}
\makeatother
\usepackage {amsmath}
\makeatletter
\newtheorem{proposition}{Proposition}
\usepackage{newtxtext,newtxmath}

\usepackage{booktabs}
\usepackage{adjustbox} 

\begin{document}


\title{A Novel 3D Antenna Architecture with Spatial Resource Allocation for Massive MIMO HAPS }

\author{\IEEEauthorblockN{Rozita Shafie\IEEEauthorrefmark{1}\IEEEauthorrefmark{2},
Omid Abbasi\IEEEauthorrefmark{1}, 
Halim Yanikomeroglu\IEEEauthorrefmark{1}, and
Mohammad Javad Omidi\IEEEauthorrefmark{2}\IEEEauthorrefmark{3}}
 \thanks{This work was sponsored, in part, by a Natural Sciences and Engineering Research Council of Canada (NSERC) Discovery Grant. }
\IEEEauthorblockA{\IEEEauthorrefmark{1}Non-Terrestrial Networks Lab, Department of Systems and Computer Engineering, Carleton University, Ottawa, Canada\\
\IEEEauthorrefmark{2}Department of Electrical and Computer Engineering, Isfahan University of Technology, Isfahan 84156-83111, Iran\\
\IEEEauthorrefmark{3}Department of Electronics and Communication Engineering, Kuwait College of Science and Technology, Doha 35003, Kuwait\\
Email: \IEEEauthorrefmark{1}rozitashafie@sce.carleton.ca,
\IEEEauthorrefmark{1}omidabbasi@sce.carleton.ca,
\IEEEauthorrefmark{1}halim@sce.carleton.ca,
\IEEEauthorrefmark{2}omidi@iut.ac.ir}}

\maketitle



\begin{abstract}

Spatial correlation poses a significant challenge in massive multiple-input multiple-output (MIMO) high-altitude platform station (HAPS) systems. The inherent spatial correlation among antenna elements on the HAPS induces high correlation and interference among users' channel gains. To mitigate this issue, we propose an integrated approach that combines spatial resource allocation and user clustering.
 In our proposed solution, we assign the same resource blocks to users with orthogonal channel gains, while users with non-orthogonal channel gains receive different resource blocks.
 Additionally, we propose a sectorized antenna architecture for the massive MIMO HAPS base station, specifically designed to directly transmit three-dimensional beams to users and reduce spatial correlation among antenna elements.
This work addresses the joint optimization problem of power allocation and resource allocation to maximize the overall data rate of the massive MIMO HAPS system. Simulation results revealed the role of spatial resource allocation in managing spatial correlation and interference among users.
\end{abstract}

\begin{IEEEkeywords}
High-altitude platform station (HAPS), massive MIMO,  
sectorized antenna array,   spatial correlation, resource allocation.
\end{IEEEkeywords}

\section{Introduction}
The progression towards 6G wireless networks necessitates innovative technologies to support higher data rates, reduced latency, seamless connectivity, and broader coverage. High-altitude platform station (HAPS) systems present a promising approach to achieving these goals \cite{9380673}. 
By leveraging the vast antenna arrays available on HAPS, massive multiple-input multiple-output (MIMO)  has the potential to significantly improve spectral efficiency, capacity, and interference management \cite{9779715,10736960, 11008878, 10608095}.  However, the integration of massive MIMO to HAPS introduces a critical challenge – the issue of spatial correlation \cite{10008738}.  Unlike traditional terrestrial deployments where spatial correlation is often minimal, HAPS, due to their altitude and geometry, lead to correlated channels among antennas, potentially compromising the performance gains anticipated from Massive MIMO. Addressing the spatial correlation issue in the massive MIMO HAPS systems requires innovative approaches. These include optimizing spatial resource allocation, developing advanced beamforming techniques, and designing innovative antenna architectures \cite{9110855}. 

 Designing antenna architectures can significantly reduce the adverse effects of spatial correlation among antenna elements, thus enhancing the performance of MIMO systems on HAPS. 
 In addition to antenna design, spatial resource allocation stands out as a crucial approach for addressing spatial correlation within massive MIMO systems deployed on HAPS. While antenna design shapes the physical characteristics of the array, spatial resource allocation complements this by strategically distributing communication resources such as transmit power, beamforming weights, time and frequency resources, and user assignments across the array of antennas. Spatial resource allocation exploits the spatial diversity inherent to the antenna array, effectively mitigating the adverse effects of correlated channels by intelligently allocating these resources~\cite{9110855}.

 This research introduces a novel sectorized antenna architecture that integrates uniform planar arrays  (UPAs) to mitigate spatial correlation in antenna elements and enable 3-dimensional (3D) beam generation.  
Additionally, we have introduced a spatial resource allocation technique for terrestrial users aimed at alleviating spatial correlation and interference issues among them. Our approach involves assigning identical resource blocks to users whose channel gains are orthogonal to each other, while allocating distinct resource blocks to users with non-orthogonal channels. 
In the process of user grouping, we utilize the directional information of users' channels to identify those with orthogonal channels to each other. 
Then we determine the optimal number of resource blocks required to serve the users effectively.
This paper addresses the joint optimization problem of power and resource allocation. Within this context, we strive to efficiently allocate power and resource blocks among users to maximize the overall data rate while ensuring that users' quality of service (QoS) requirements are met.
\section{System Model}
\begin{figure*}[htbp]
\begin{centering}
\centerline{\includegraphics[width = 16.35cm ]{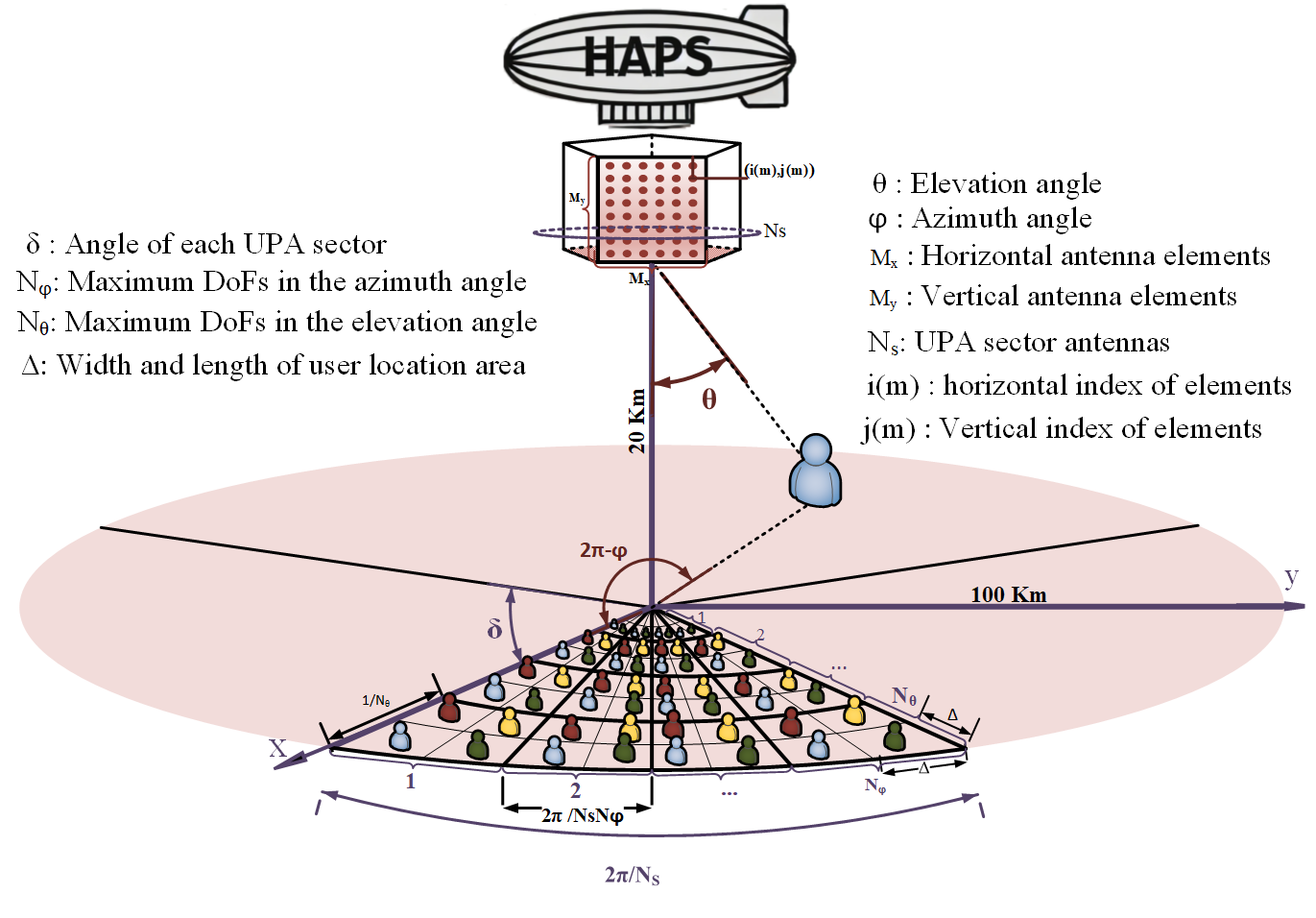}}
\caption{The proposed antenna architecture featuring a sectorized cylindrical design and corresponding communication setups. In this figure, users with the same color belong to the same group and are allocated the same resources. The scenario features \( N_\theta \) sections in radius and \( N_\varphi \) sections in azimuth for each UPA sector, with a frequency reuse factor of $4$ for each section.}
\label{system_model_figure}
\end{centering}
\end{figure*}
In this study, we investigate a network that encompasses terrestrial users and a massive MIMO HAPS system. This system model is depicted in Fig.~\ref{system_model_figure}. The HAPS system hovers at a strategic altitude of approximately $20$ kilometers, enabling a wide coverage area that extends to a radius of $100$ kilometers for ground-based users. To maintain simplicity, each user is equipped with a single omnidirectional antenna. To efficiently manage the network resources, terrestrial users are organized into distinct clusters, each assigned specific resource blocks. Within a given cluster, all users operate on the same frequency band, while different clusters utilize different frequency bands. 
The azimuth ($\varphi$) and elevation ($\theta $) angles are both defined as the angle with the x-axis in the horizontal plane and the angle in front of the HAPS elevation, respectively. 

\section{Proposed antenna architecture for HAPS}
 Fig.~\ref{system_model_figure} illustrates the proposed antenna architecture for the HAPS, which incorporates a configuration comprising vertical UPAs positioned around the cylindrical. Each sector of the UPA comprises \( M_x \times M_y \) antenna elements. The elements \(M_x\) and \(M_y\) are positioned in the horizontal and vertical planes, respectively. The spacing between the antenna elements is a critical parameter. In our scheme, \( d_H \) dictates the horizontal separation and \( d_V \) the vertical separation of the elements that are carefully chosen to be integral multiples of the signal wavelength.
The antenna index of the UPA is denoted by $m \in \left[ {1,M} \right] $. In general, for the location of the $m^{\rm th}$ antenna in the $n^{\rm th}$ sector, we have \cite{SIG-093}
\begin{equation}
{\bf u}_m^n   = \left[ {\begin{array}{*{20}c}
    0\\
   {i\left( m \right)d_H \lambda } \\
   {j\left( m \right)d_V \lambda }   \\
\end{array}} \right],m \in \left[ {1,M} \right] \\  ,\; \; \; \; n=1,...,N_s,
\end{equation}
\begin{equation}
\begin{array}{l}
 i\left( m \right) = \bmod \left( {m - 1,M_x } \right), \\ 
 j\left( m \right) = \left\lfloor {{{\left( {m - 1} \right)} \mathord{\left/
 {\vphantom {{\left( {m - 1} \right)} {M_x }}} \right.
 \kern-\nulldelimiterspace} {M_x }}} \right\rfloor . \\ 
 \end{array}
\end{equation}

\subsection{Propagation and Channel Modeling}
The HAPS base station is situated at a high elevation from the ground. As a result, scatterers are rarely found near the base station. However, the scatterers are generally located around the users and the signal distribution in the environment is non-uniform. Due to the non-uniform distribution of the signal in the HAPS propagation environment, and the existence of an LoS path between the HAPS and terrestrial users, the channel for user $k$ is modeled with a Gaussian distribution  as shown below:
\begin{equation}
{\bf h}_k   \sim {\rm N} \left( {{\bf \bar h}_k  ,{\bf C}_k   }
\right),
\end{equation}
where ${\bf \bar h}_k  $ corresponds to the line of sight (LoS) component and  ${\bf C}_k   $  is a positive semi-definite covariance matrix that describes the spatial correlation of the non (N)-LoS components. In other words, it describes the statistical relationship between the signal fading at different antenna positions\cite{8620255}.
With $M$ antenna elements on  UPA, the LoS channel response ${\bf \bar h}_k\in \mathbb{C}^{M} $ for    user $k$ is given by
\begin{equation}
{\bf \bar h}_k  = \sqrt{\beta _k^{\rm{LoS}} }\cdot ({\bf v}_k^\varphi \otimes {\bf v}_k^h),
\end{equation}
where the symbol \( \otimes \) denotes the Kronecker product.   The vectors $ {\bf v}_k^\varphi$ and ${\bf v}_k^h $ represent the azimuth and elevation direction vectors, respectively, that can be derived as
\begin{equation}
{\bf v}_k^\varphi   = \frac{1}{{\sqrt {M_x  } }}\left[ {\begin{array}{*{20}c}
   {1} & {e^{ - j\pi \mu _k^\varphi  } } & { \cdots } & {e^{ - j\pi \left( {M_x   - 1} \right)\mu _k^\varphi  } }  \\
\end{array}} \right]^T  \in C^{M_x   \times 1} ,
\label{aa1}
\end{equation}
\begin{equation}
{\bf v}_k^h  = \frac{1}{{\sqrt {M_y } }}\left[ {\begin{array}{*{20}c}
   1 & {e^{ - j\pi \mu _k^h } } &  \cdots  & {e^{ - j\pi \left( {M_y  - 1} \right)\mu _k^h } }  \\
\end{array}} \right]^T  \in C^{M_y  \times 1} .
\label{bb2}
\end{equation}
In these equations, $\mu_k^\varphi$ and $\mu_k^h$ are spatial angular coordinates for user $k$ that are related to the physical angles as $\mu_{k}^\varphi = \sin(\theta_{k}) \cos(\varphi_{k})$ and $\mu_{k}^h = \cos(\theta_{k})$.
Based on the Kronecker multiplication and simplification of the LoS channel equation, we are able to derive the following formula:
\begin{equation}
{\bf \bar h}_k   = \sqrt{\beta _k^{\rm{LoS}} } \left[ {e^{j{\bf k}\left( {\varphi _k ,\theta _k } \right)^T {{\bf u}_1  }}} , \cdots ,e^{j{\bf k}\left( {\varphi _k ,\theta _k } \right)^T {{\bf u}_M }} \right]^T,
 \label{eqa2}
 \end{equation}
 where  ${\varphi _k} $ and $\theta _k$ are  the azimuth angle and the elevation angle of user $k$ to the HAPS antenna, respectively.
 The wave vector  $ {\bf k}\left( {\varphi _k ,\theta _k } \right)$
 is also defined as follows\cite{SIG-093}:
\begin{equation}
{\bf k}\left( {\varphi _k ,\theta _k } \right) = \frac{{2\pi }}{\lambda }\left( {\begin{array}{*{20}c}
   {\cos \left( {\theta _k } \right)\cos \left( {\varphi _k } \right)}  \\
   {\cos \left( {\theta _k } \right)\sin \left( {\varphi _k } \right)}  \\
   {\sin \left( {\theta _k } \right)}  \\
\end{array}} \right) .
 \end{equation}
The terms \( \beta_k^{\text{LoS}} \) and \( \beta_k^{\text{NLoS}} \) represent the large-scale fading coefficients for the LoS and  NLoS paths, respectively \cite{3gpp.36.33}. 

Referring to the 3D local scattering model and the assumption of a uniform distribution for the independent random variables $\theta$ and $\varphi$, we can express the $(a,b)^{th}$ entity of the spatial correlation matrix (${\bf C}_k \in \mathbb{C}^{M \times M}$) as follows \cite{10008738, SIG-093}:
 \begin{equation}
 \left[ {{\bf C}_k  } \right]_{a,b}  = \frac{{\beta _k^{\rm NLoS , n} }}{{4\Delta \varphi \Delta \theta }}\int_{\theta  - \Delta \theta }^{\theta  + \Delta \theta } {\int_{\varphi  - \Delta \varphi }^{\varphi  + \Delta \varphi } {e^{j{\bf k}\left( {\varphi _k ,\theta _k } \right)^T \left( {{\bf u}_a  - {\bf u}_b } \right)} d\varphi d\theta,} }
 \end{equation}
The horizontal angular
spread $\left( {\Delta \varphi } \right)$ and the vertical spread $\left( {\Delta \theta } \right)$ 
are related to the 3D one-ring model \cite{SIG-093}.

\section{Proposed user grouping and resource allocation}
In our HAPS network communication system design, leveraging user channel gain orthogonality is essential for enhanced performance. We propose using user channel direction vectors as a precoding matrix to align transmitted signals with channel directionality, enabling effective signal isolation upon reception.
The channel direction vectors, denoted as \({\bf v}_k ={\bf v}_k^\varphi \otimes {\bf v}_k^h \) for the \(k\)-th user, are derived from the combination of azimuth and elevation information, as detailed in equations \eqref{aa1} and \eqref{bb2}. 

In the proposed integrated communication system, terrestrial users are organized into clusters to optimize resource allocation and enhance system efficiency. This clustering approach is underpinned by the spatial characteristics of users' channel directions. Specifically, users that exhibit orthogonal channel directions are aggregated into the same cluster. This orthogonality ensures minimal interference among the signals of users within the same group, allowing them to share the same set of resource blocks effectively.
The orthogonality condition between the channel vectors of users \(i\) and \(j\) is mathematically represented as \({\bf v}_i^H {\bf v}_k = 0\) for \(i \neq k\), where \({\bf v}_i\) and \({\bf v}_j\) are the channel direction vectors of users \(i\) and \(j\), respectively, ensuring minimal interference between them.
By expanding the orthogonality condition and applying the properties of the Kronecker product along with the conjugate transpose, based on \eqref{aa1}, and \eqref{bb2},  we derive the following result:
\begin{equation}
\left( \frac{1}{{M_x M_y }} \right) \sum_{n=0}^{M_x-1} e^{j\pi n(\mu_i^\varphi - \mu_k^\varphi)} \sum_{m=0}^{M_y-1} e^{j\pi m(\mu_i^h - \mu_k^h)} = 0. \label{eq:orthogonality_expanded}
\end{equation}
The geometric series summation formula is applied to each summation term separately, transforming the expression to the following conditions for zero interference:
\begin{equation}
e^{  j\pi M_x  \left( {\mu _i^\varphi   - \mu _k^\varphi  } \right)}  = 1,\; \; or\; \; e^{  j\pi M_y \left( {\mu _i^h  - \mu _k^h } \right)}  = 1,
\label{eq:zero_condition}
\end{equation}
indicating that $\left( {\mu _i^\varphi   - \mu _k^\varphi  } \right)$ and $ \left( {\mu _i^h  - \mu _k^h } \right)$ must be an integer multiple of 2.
Further analysis reveals that if \(M_x (\mu_i^\varphi - \mu_k^\varphi)\) and \(M_y (\mu_i^h - \mu_k^h)\) are both integer multiples of 2, then the orthogonality condition is satisfied. Rearranging these conditions, we get
\begin{equation}
\mu_i^\varphi - \mu_k^\varphi = \frac{2q_{i,k}^\varphi}{M_x}, \quad \mu_i^h - \mu_k^h = \frac{2q_{i,k}^h}{M_y}, \label{eq:mu_conditions}
\end{equation}
where \(q_{i,k}^\varphi\) and \(q_{i,k}^h\) are integers that ensure the difference in azimuth and elevation phases between any two users aligns with the array's geometry, thus guaranteeing orthogonality and effective spatial resource allocation. Now, we relax \eqref{eq:mu_conditions} as
\begin{equation}
\mu _i^\varphi   - \mu _k^\varphi   = \frac{{2q_{i,k}^\varphi  }}{{M_x  }} + \Delta  \; \; \; {\rm{or}} \; \; \;\mu _i^h   - \mu _k^h   = \frac{{2q_{i,k}^h  }}{{M_y }} + \Delta, \; \; \; i \ne k,
\label{g10}
\end{equation}
where the value of $\Delta $ is considered to be sufficiently small so that equation \eqref{g10} can still be valid. As a result, we can disregard the influence of co-channel interference among the users within the same resource block. In addition, by using \eqref{g10}, we can ascertain the minimum area required for each user to support orthogonal channels. In the azimuth and elevation domains, this minimum area is dictated by \(\Delta  \times \Delta \). As illustrated in Fig. \ref{system_model_figure}, the area that belongs to each sector is divided into degree of freedom (DoF) sections, and each section is further divided into $\Delta  \times \Delta $ subsections.
To determine the optimal values of \(\Delta \), we must solve an optimization problem. High values of \(\Delta \) result in increased interference among users but allow for efficient utilization of resource blocks. If one resource block is allocated to each user, the division of each subsection by the total number of resource blocks ensues, where \(\Delta\) corresponds to the width of these subsections divided by the number of resource blocks. With larger \(\Delta\) widths, users will be allocated more resource blocks, but interference among users will also increase. In the resource allocation problem, we determine the minimum number of resource blocks that should be allocated for each user in order to maximize the total rate while respecting the QoS conditions.
In order to derive the number of sections, we need to calculate the maximum DoF in both azimuth and elevation domains that are shown by \(N_\varphi\) and \(N_\theta\), respectively. 
In the following proposition, \(N_\varphi\) and \(N_\theta\) are derived based on the antenna configuration and the angular spread in each domain.
\begin{proposition}\label{proposition-outage}
The maximum DoF in azimuth domain (\(N_\varphi\)) is given by
\begin{equation}
 N_\varphi = \left\lfloor { M_x \sin \frac{\delta}{2}} \right\rfloor, 
 \label{azimuthDoF}
\end{equation}
 and  the maximum DoF in elevation domain \((N_\theta)\) is derived as
\begin{equation}
 N_\theta = \left\lfloor \frac{M_y \sin \theta}{2} \right\rfloor.
 \label{elevationDoF}
\end{equation}
\end{proposition}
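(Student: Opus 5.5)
The plan is to count how many mutually orthogonal beams fit inside the range of spatial angular coordinates that a sector can actually see, using the orthogonality spacing in \eqref{eq:mu_conditions}. By \eqref{eq:mu_conditions}, two users are orthogonal in the azimuth domain when their coordinates $\mu^\varphi$ differ by an integer multiple of $2/M_x$. In the elevation domain the spacing is $2/M_y$. So the maximum DoF in each domain is the number of grid points with that spacing that fit in the accessible interval of $\mu$. That is, the interval length divided by the spacing, rounded down. The proof therefore reduces to finding the range of $\mu^\varphi$ and $\mu^h$ over the coverage region of one sector.

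\textbf{Azimuth.} Each UPA sector faces outward from the cylinder and covers an azimuth wedge of width $\delta$, centered on the array broadside. I would define the sector angle $\delta = 2\pi/N_s$, or more generally the angular span covered by a sector. Measuring $\varphi$ from the array axis, the broadside-relative coordinate behaves like $\sin(\varphi')$ with $\varphi'\in[-\delta/2,\delta/2]$. This is the usual relation $\mu^\varphi = \sin\theta\cos\varphi$ once the angle is referenced to the array line rather than the broadside. Bounding the elevation factor by $\sin\theta\le 1$, the range of $\mu^\varphi$ is $[-\sin(\delta/2),\sin(\delta/2)]$, of length $2\sin(\delta/2)$. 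Dividing by the spacing $2/M_x$ and taking the integer part gives $N_\varphi=\lfloor M_x\sin(\delta/2)\rfloor$, which is \eqref{azimuthDoF}.

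\textbf{Elevation.} Here $\mu^h=\cos\theta_k$, and users' $\theta_k$ range over $[0,\theta]$, measured from the vertical, say, with $\theta$ the maximum coverage angle set by the $100$ km radius and $20$ km altitude. I would convert the coordinate so that it is measured as a sine from the reference direction. The accessible interval of $\mu^h$ then has length $\sin\theta$, i.e., only one side of broadside is used because users lie below the platform. Dividing by $2/M_y$ and flooring gives $N_\theta=\lfloor M_y\sin\theta/2\rfloor$, which is \eqref{elevationDoF}.

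\textbf{Main obstacle.} The hardest step is the geometric bookkeeping of angle conventions. The paper's definitions of $\mu^\varphi$, $\mu^h$ and of $\theta$ are not fully consistent with the wave vector ${\bf k}(\varphi,\theta)$. I would first fix a single convention: angles relative to the broadside normal of each vertical UPA. I would then verify that the image of the sector's coverage region has span $2\sin(\delta/2)$ in azimuth (two-sided) and $\sin\theta$ in elevation (one-sided). I would also need to argue that the floor is correct as a maximum, not merely achievable. The argument is that any set of points with pairwise differences in $(2/M)\mathbb{Z}\setminus\{0\}$ has at most $\lfloor L M/2\rfloor$ elements in an interval of length $L$, up to the endpoint convention, which I would treat as closed on one end.
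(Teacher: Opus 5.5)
Your route is the paper's own. The appendix takes the span of the spatial coordinate from its geometry figure: $2\sin(\delta/2)$ in azimuth with $\delta=2\pi/N_s$, and $\sin\theta$ in elevation with $\theta=\arctan(r/h)$. It divides by the spacing $2/M_x$ or $2/M_y$ from \eqref{eq:mu_conditions}, takes the integer part, and makes no maximality argument. Two of the steps you add to make this rigorous do not go through.

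\textbf{The counting lemma is false.} Points of a lattice with spacing $2/M$ inside an interval of length $L$ can number $\lfloor LM/2\rfloor+1$ (closed interval) or $\lceil LM/2\rceil$ (half-open). The bound is not $\lfloor LM/2\rfloor$ under either endpoint convention. Take the paper's own parameters $M_x=3$, $N_s=6$, so the azimuth interval is $[-1/2,1/2]$. The points $\mu=-1/2$ and $\mu=1/6$ differ by $2/3$, and $1+e^{j2\pi/3}+e^{j4\pi/3}=0$, so they give two orthogonal azimuth directions. Yet $\lfloor M_x\sin(\delta/2)\rfloor=\lfloor 1.5\rfloor=1$. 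So the floor is not an upper bound on the number of mutually orthogonal directions. What it counts exactly is the number of disjoint sections of width $2/M_x$ (resp. $2/M_y$) that fit in the span. That is how the paper uses $N_\varphi$ and $N_\theta$ when it partitions a sector into $N=N_\varphi N_\theta$ sections. State that interpretation instead of attempting a maximality proof.

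\textbf{The elevation step does not follow from your convention.} In the azimuth part you treat $\theta_k$ as the angle from the vertical; that is where the factor $\sin\theta_k\le 1$ in $\mu^\varphi=\sin\theta_k\cos\varphi_k$ comes from. With that same convention, $\mu^h=\cos\theta_k$ for $\theta_k\in[0,\theta]$ spans $[\cos\theta,1]$, of length $1-\cos\theta$. For $\theta=\arctan 5$ that is about $0.80$, not $\sin\theta\approx 0.98$.

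Re-expressing the coordinate ``as a sine'' does not help. $\sin\theta_k$ is the horizontal direction cosine, which does not enter the phase progression along the UPA's vertical axis. If you reference angles to the horizontal broadside of a vertical UPA, the vertical direction cosine of users in the coverage disc again spans $[\cos\theta,1]$. So the check you plan in your last paragraph would give $\lfloor M_y(1-\cos\theta)/2\rfloor$, not \eqref{elevationDoF}.

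The span $\sin\theta$ arises only if you read the paper's $\mu^h=\cos\theta_k$ with $\theta_k$ measured from the horizontal. Then $\theta_k\in[\pi/2-\theta,\pi/2]$ and $\cos\theta_k\in[0,\sin\theta]$. The paper asserts this from its figure rather than deriving it. To reproduce the stated formula you must adopt that convention for the elevation coordinate explicitly, and note that it differs from the one you use for azimuth.
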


\begin{proof}
See Appendix \ref{appendix2-outage}.
\end{proof}

\subsection{Achievable rate with the proposed resource allocation }
  
The transmitted signals of users in cluster \(l\) from the sector \(n\) of the HAPS are modulated by a precoding matrix \(\mathbf{P}_l^n \in \mathbb{C}^{M \times M}\) and then transmitted over the radio channel. Therefore, the received signal for the user in the \(l^{\text{th}}\) cluster of the \(m^{\text{th}}\) section at sector $n$ can be expressed as follows:
 \begin{equation}
\begin{aligned}
{\mathbf{ y}}_{m,l}^n  = \left( {{\bf h}_{m,l}^n } \right)^H  {\mathbf P_l^n \mathbf s_l^n} + {\mathbf n}_{m,l}^n,
\label{eq5}
\end{aligned}
\end{equation}
where \({\bf s}_l^n\) $\in \mathbb{C}^{M \times 1}$ represents the signal for cluster $l$ using the same resource block. 
$ {\mathbf n}_{m,l}^n $ is the additive white Gaussian noise with variance $ \sigma ^2$. 
 To further examine the user's observed signal, we can express it as follows:
\begin{equation}
\begin{array}{l}
 {\bf y}_{m,l}^n  = \left( {{\bf h}_{m,l}^n } \right)^H {\bf p}_{m,l}^n \sqrt {P_{\max } \Omega _{m,l}^n } s_{m,l}^n  +  \; \; \; \;\; \; \; \;\;   \\ 
 \underbrace {\sum\limits_{k = 1,k \ne l}^M {\left( {{\bf h}_{m,l}^n } \right)^H } {\bf p}_{m,k}^n \sqrt {P_{\max } \Omega _{m,k}^n } s_{m,k}^n }_{{\rm interference \;from \;  other\;   users\;   in\;  the \; cluster}} + {\bf n}_{m,l}^n . \\ 
 \end{array}
 \label{eq6}
\end{equation}
The $\Omega_{m,l}^n$ denotes the power allocation coefficient for the user in subsection $l$ of section $m$ served by sector $n$, and $s_{m,l}^n$ is the symbol for the $l^{\rm th}$ user in the $m^{\rm th}$ section.
$P_{\rm max}$ denotes the maximum power of the HAPS BS for serving users.
Inter-cluster interference is effectively eliminated by employing separate resource blocks across different clusters. Furthermore, inter-user interference is mitigated by ensuring that the channel gains of users within each cluster are orthogonal to each other. This orthogonalization enables the separation of user signals by multiplying the channel gain of each user with the received signal.
The data rate of user \( (m,l) \), to whom \( r \) resource blocks are allocated, is given by:
\begin{equation}
R_{m,l}^n   = r . BW_{RB} .  \log _2 \left( {1 + \frac{{\rho \Omega _{m,l}^n \left| {\left( {{\bf h}_{m,l}^n } \right)^H {\bf p}_{m,l}^n } \right|^2 }}{{\rho \sum\limits_{k,k \ne l} {\Omega _{m,k}^n \left| {{\bf h}_{m,l}^H {\bf p}_{m,k}^n } \right|^2  + 1} }}} \right),
 \end{equation}
 where $ \rho = \frac{P_{\max}}{N_0 . r . BW_{RB}}
 $.

\section{Problem Formulation}
The goal of this paper is to find the power allocation coefficients of terrestrial users to maximize the total rate while ensuring that each user achieves a pre-defined minimum data rate. The problem can be formulated as follows:
\begin{maxi!}|s|[2]
	{\Omega_{m,l}^n, r }{ {\sum\limits_{n = 1}^{N_s } {\sum\limits_{m = 1}^{N} {\sum\limits_{l = 1}^L {R_{m,l}^n } } } } \label{eq:ObjectiveExample3}}
	{\label{eq:Example3}}
    {}
\addConstraint{R_{m,l}^n  \ge R^{\rm QoS} ,\; \; \; r \in \left\{ {1, \cdots ,nbr } \right\}\label{eq:constr1}}
\addConstraint {P_{\max } {\sum\limits_{n = 1}^{N_s } {\sum\limits_{m = 1}^{N} {\sum\limits_{l = 1}^L { {\Omega _{m,l}^n } } }  \le P_t}} \label{eq:constr2}}
\addConstraint {
\Omega _{m,l}^n  \ge 0, \; \; \; \forall m,l,n.
\label{eq:constr3}}
\end{maxi!}
\begin{algorithm}
  \caption{ The resource allocation algorithm.}
\begin{algorithmic}[1] 
\State \textbf{ Initialization \& definitions.} \\
Initialize: $M_x, M_y, N_s, \delta $ and $r$.\\
Calculate $N_{\varphi}$, $N_{\theta} $ from  \eqref{azimuthDoF}, \eqref{elevationDoF}.
\\
 $N = N_\varphi   \times N_\theta$ (total sections per sector).
\\
 $nbr = \frac{{\rm BW}}{{\rm BW_{\rm RB} }}$ (number of resource blocks).\\
 $L = \frac{{nbr}}{r}$ (number of subsections per section).
\\
\textbf{ User clustering \& resource allocation.}
\For{$n=1$ to $N_s$}
\For{$l = 1$ to $L$} 
    \For{$m = 1$ to $N$} 
        \State $\rm{Cluster} \;l \gets {\rm{UE}}_{m,l}^n $  (section $m$, subsection $l$, sector $n$). 
    \EndFor
    \State Allocate $r$ resource blocks to $\rm{Cluster} \; l$.
    \If{there are multiple users at the same location $(m,l)$ in $\rm{Cluster} \; l$}
        \State Allocate orthogonal time slots to these users.
    \EndIf
\EndFor
\EndFor
\\ 
\textbf{ Power allocation to maximize the total rate.}\\
 Allocation of minimum power to satisfy the QoS constraint for all users.\\
$  \Omega _{m,l}^{n,\rm{min}}  = \left( {\frac{2^{R_{\rm{min}} }  - 1}{{\rho {\left| { \left( {{\bf h}_{m,l}^n } \right)^H  {\mathbf p}_{m,l} } \right|^2} }}} \right),  $\\
Arrange users according to their channel quality gains based on the following equation:
\\
$ \Delta P_{m,l}^n  = \left( {2^{\Delta R}  - 1} \right)\frac{{P_{\max } 2^{R_{m,l}^n } }}{{\rho \left| {\left( {{\bf h}_{m,l}^n } \right)^H {\bf p}_{m,l} } \right|^2 }}$.\\
Sort fractional levels in ascending order:\\
 $ \left[ {H,m} \right] = {\rm{sort}}\left( {\frac{{P_{\max } 2^{ {R_{\min } } } }}{{\rho \left| {\left( {{\bf h}_{m,l}^n } \right)^H {\mathbf p}_{m,l} } \right|^2}}} \right)$.\\
    ${\bf while} \; ({\rm P}_{{\rm rem}}  > 0):$ \\ 
    $\left\{ {{\Omega _{m\left( i \right),l}^{n } }  = \frac{{H\left( {i + 1} \right) - H\left( i \right)}}{{P_{\max } }},\; \; i +  = 1} \right\}$.
\\  
\textbf{ Calculate:}\\
$
R_{m,l}^n   = r . {\rm BW}_{\rm RB}  . \log _2 \left( {1 + \frac{{\rho \Omega _{m,l}^n \left| {\left( {{\bf h}_{m,l}^n } \right)^H {\bf p}_{m,l}^n } \right|^2 }}{{\rho \sum\limits_{k,k \ne l} {\Omega _{m,k}^n \left| {{\bf h}_{m,l}^H {\bf p}_{m,k}^n } \right|^2  + 1} }}} \right)
$.
\label{al1}
\end{algorithmic}
\end{algorithm}
Constraint \eqref{eq:constr1} ensures the QoS conditions for users, while \eqref{eq:constr2} and \eqref{eq:constr3} represent the power limitations of the HAPS and users, respectively.
The objective function of equation \eqref{eq:ObjectiveExample3} is non-convex, and hence obtaining an optimal solution for this problem is challenging. 
In Algorithm 1, a resource allocation algorithm is presented for distributing power, frequency, and time. Initially, users are grouped according to their channel directions, and resource blocks are distributed among these clusters. In cases where multiple users are situated within the same subsection, time slot allocations are applied to these users. 
As the width and length of the subsection increase, the orthogonality among users sharing the same resource blocks decreases, resulting in increased interference among them. Therefore, selecting the optimal number of resource blocks involves a trade-off between spectral efficiency and interference management. 
The number of resource allocation states is constrained by the total number of resource blocks available. Therefore, in the simulation results, we select the most favorable allocation state according to maximizing the total rate.
Subsequently, for power allocation, we initially assign the minimum power necessary to meet the QoS requirements of users. Then, the remaining power is allocated among users to maximize the total rate. To allocate the remaining power, users are sorted based on the quality of their channel gain, with higher priority given to users with superior channel gain. This ensures that more power is allocated to users with better channel conditions, optimizing overall system performance.

\section{Simulation results}
This section provides simulation outcomes for the proposed scenario, illustrating its performance improvements. The simulation parameters are detailed in table \ref{table:simulation-parameters}.
We explore the spectral efficiency of a massive MIMO HAPS BS utilizing our proposed antenna model. In this scenario, we tackle the integrated resource and power allocation problem to maximize the total data rate. Since the spatial correlation among adjacent users is significantly high in massive MIMO HAPS systems, different resource blocks are allocated to users who interfere with each other to mitigate interference.
 Users with orthogonal channels in azimuth, elevation, or both receive the same resource blocks.  The boundary of orthogonality thus plays a crucial role in defining user interference. 
 \begin{table}[h]
\centering
\caption{Simulation parameters.}
\begin{tabular}{cc} 
\hline\hline 
Area radius & $ 100\; \rm{km} $ \\
Carrier frequency $(f_c)$ & $ 2.5 \; \rm{GHz} $ \\
Bandwidth & $ 10\; \rm{MHz}, 20\; \rm{MHz}$ \\
Bandwidth per resource block & $180 \; \rm{kHz}$ \\
Elevation of the HAPS & $ 20\; \rm{km}$ \\
Minimum rate of each user ($R_{\min}$) &$ 1\; {{\rm{bps}} \mathord{\left/
 {\vphantom {{\rm{bps}} {\rm{Hz}}}} \right.
 \kern-\nulldelimiterspace} {\rm{Hz}}} $ \\
Power spectral density of thermal noise & $  - 174 \; \rm{dBm/Hz}  $ \\
Noise figure & $ 7\; \rm{dB}$\\
Shadow fading parameters ($\sigma _{\text{SF}}^{\text{LoS}}, \sigma _{\text{SF}}^{\text{NLoS}}$) & $(4,6) \, \rm{dB} $ \cite{3gpp.36.33} \\
\hline 
\end{tabular}
\label{table:simulation-parameters}
\end{table}

 Fig.  \ref{datarate10MHZ} illustrates the sum rate of a massive MIMO HAPS system with a constant total number of cell sections and varying numbers of subsections. We conducted this analysis with two different bandwidths, i.e., \(10 \text{ MHz}\) and \(20 \text{ MHz}\). For a bandwidth of \(10 \text{ MHz}\), the highest sum rate occurs with \(L=25\) subsections and $r=2$ resource blocks allocated per user. With more subsections, such as \(L=49\), spectral efficiency decreases due to under-utilization bandwidth, while fewer subsections, such as \(L=16\), result in higher interference among users.
This condition applies to a bandwidth of \(20 \text{ MHz}\), where the highest total rate is achieved with \(L=36\) subsections and \(r=3\) resource blocks per user.
\begin{figure}[htbp]
\centerline{\includegraphics[width =8.5cm]{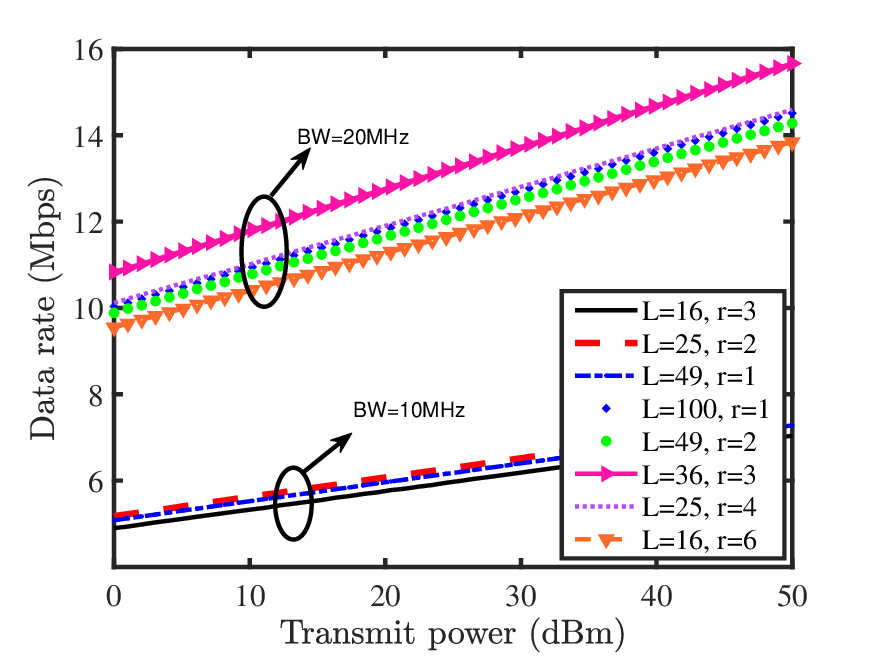}}
\caption{Total sum rate versus HAPS BS power with \(M=16\) elements per UPA, and antenna spacing \(0.5\lambda\). Variables \(L\) and \(r\) indicate the number of subsections per section and resource blocks per user, respectively. The total number of resource blocks is \(50\) for a bandwidth of \(10 \text{ MHz}\) and \(100\) for a bandwidth of \(20 \text{ MHz}\). }
\label{datarate10MHZ}
\end{figure}
\begin{figure}[htbp]
\begin{centering}
\centerline{\includegraphics[width=8.5cm]{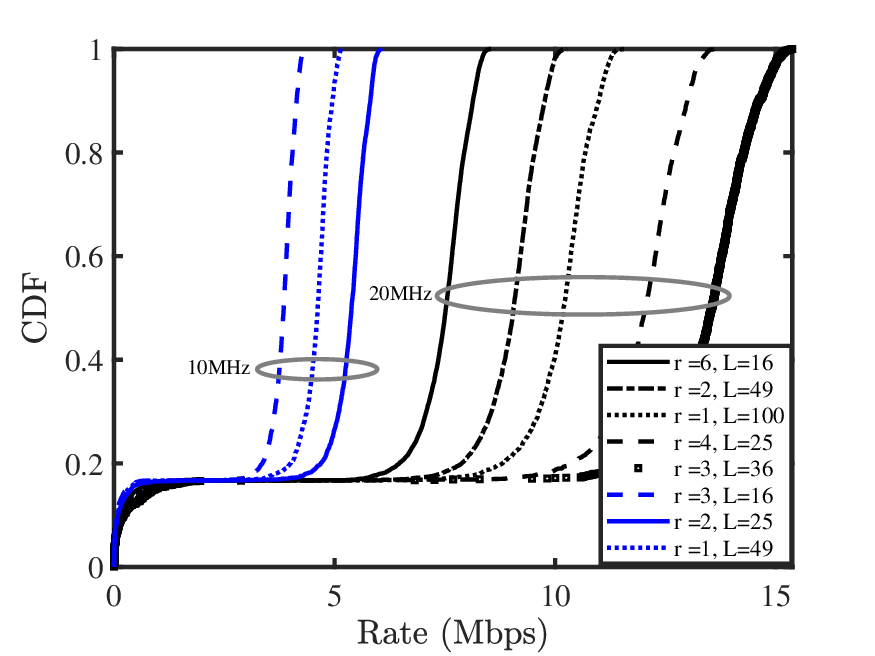}}
\caption{Comparison of resource blocks and subsections for \(10~\text{MHz}\) and \(20~\text{MHz}\) bandwidths. For \(10~\text{MHz}\) (blue lines), resource blocks of 1, 2, and 3 correspond to subsections of 49, 25, and 16, respectively. For \(20~\text{MHz}\) (black lines), resource blocks of 1, 2, 3, 4, and 6 correspond to subsections of 100, 49, 36, 25, and 16, respectively. Parameters used are \(N_s = 6\), \(M_x = 3\), and \(M_y = 4\).
}
\label{CDF2}
\end{centering}
\end{figure}

Fig.~\ref{CDF2} shows the cumulative distribution function (CDF) of the rate across different numbers of resource blocks and subsections for two bandwidth scenarios: \(10~\text{MHz}\) and \(20~\text{MHz}\). For the \(10~\text{MHz}\) bandwidth (blue lines), the resource blocks are set to 1, 2, and 3, corresponding to subsection counts of 49, 25, and 16, respectively. For the \(20~\text{MHz}\) bandwidth (black lines), the resource block values are 1, 2, 3, 4, and 6, with subsection counts of 100, 49, 36, 25, and 16, respectively.
As the number of resource blocks increases, the rate performance improves, with the \(20~\text{MHz}\) bandwidth achieving higher rates than the \(10~\text{MHz}\) bandwidth, as evidenced by the rightward shift in the CDF curves. Notably, the configurations \(L=25\) with \(r=2\) for \(10~\text{MHz}\) and \(L=36\) with \(r=3\) for \(20~\text{MHz}\) deliver optimal performance, illustrating the balance between spectral efficiency and interference management. This result emphasizes the trade-off between efficiently using bandwidth and managing interference across users.


 Fig. \ref{heatmap}  visualizes the correlation coefficients among users sharing resource blocks within the same cluster.  The varying color intensities on this heatmap illustrate the degrees of correlation among users. The amount of numbers indicates that users with orthogonal channel gains are selected appropriately. It is evident that as the number of subsections increases, the correlation among users decreases.
\begin{figure}[htbp]
\centerline{\includegraphics[width =6.3cm]{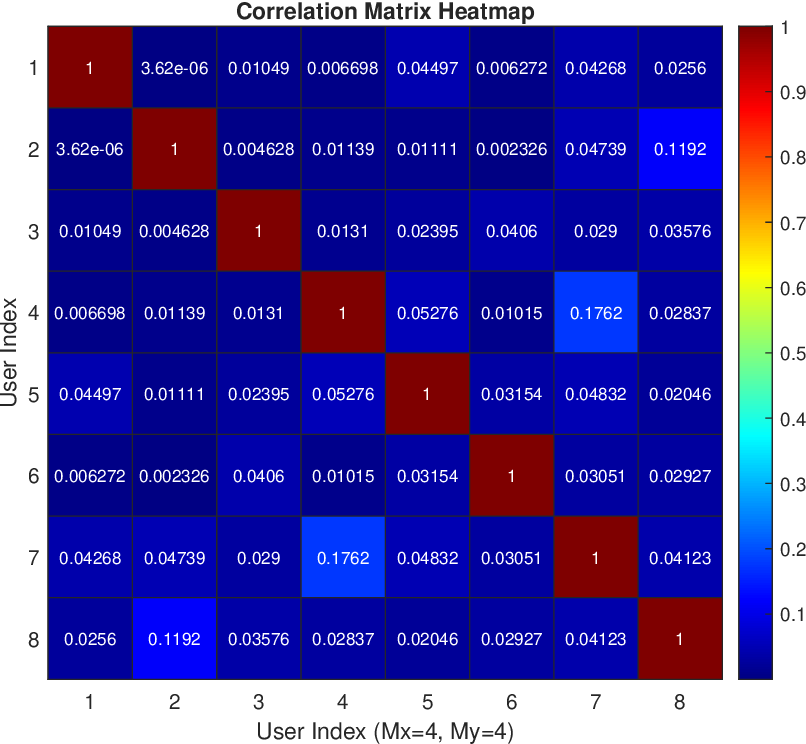}}
\caption{Heatmap of the correlation coefficients among users sharing resource blocks within the same cluster. }
\label{heatmap}
\end{figure}

\section{Conclusions}
This study introduced a novel sectorized antenna architecture and a strategic spatial resource allocation technique to enhance massive MIMO systems for the HAPS. By utilizing UPAs for reduced spatial correlation and dynamic resource block allocation based on channel orthogonality, our approach effectively minimizes interference and maximizes data rates. 
Simulation findings highlighted how spatial resource allocation influences the management of spatial correlation and user interference.

\appendices
\section{Proof of Proposition \ref{proposition-outage}}
\label{appendix2-outage}
As indicated in Fig. \ref{fig:main_figure}, for a limited azimuth angle of UPA $(\delta)$, the range of $\mu _i^{\varphi} - \mu _k^{\varphi}$ in \eqref{g10} changes to $2\sin(\delta/2)$, and the range of $\mu_i^{\theta} - \mu_k^{\theta}$ is $\sin \left( \theta \right)$, where $\theta = \arctan \left( \frac{r}{h} \right)$ and $\delta = \frac{2\pi}{N_s}$.
Thus, the maximum degrees of freedom in each azimuth and elevation axis for a vertical UPA with $\delta$ azimuth angular are derived from these equations:
\begin{equation}
\begin{array}{l}
N_\varphi = \left[ \frac{2 \sin \frac{\delta}{2}}{\frac{2}{M_x}} \right] = \left[  M_x \sin \frac{\delta}{2} \right], \\
N_{\theta} = \left[ \frac{\sin \theta}{\frac{2}{M_y}} \right] = \left[ \frac{M_y \sin \theta}{2}\right]. \\
\end{array}
\end{equation}
Therefore, the maximum degree of freedom for a UPA sector with an azimuth angle $\delta$ is given by
\begin{equation}
N = N_\varphi  N_{\theta}.
\end{equation}

\begin{figure}[htbp]  
    \centering
    \begin{subfigure}[b]{0.18\textwidth}
        \centering
        \includegraphics[width=0.9\textwidth]{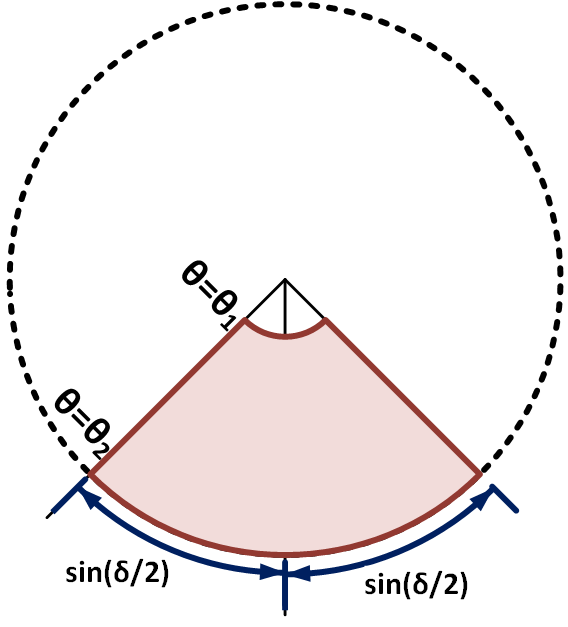}  
        \caption{
        }
        \label{a}
    \end{subfigure}
    \hspace{0\textwidth}  
    \begin{subfigure}[b]{0.25\textwidth}
        \centering
        \includegraphics[width=0.9\textwidth]{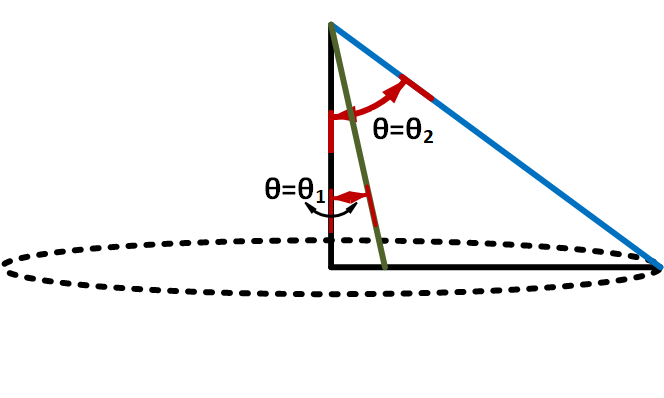}
         \caption{
         }
        \label{b}
    \end{subfigure}
    \caption{Visualization of the (a) azimuth and (b) elevation angles associated with each UPA sector.}
    \label{fig:main_figure}
\end{figure}
\ifCLASSOPTIONcaptionsoff
  \newpage
\fi
\bibliographystyle{IEEEtran}
\bibliography{IEEEabrv,Bibliography}
\vfill
\end{document}